\DeclareMathOperator*{\argmin}{\arg\!\min}
\newcommand{\true}[1]{{#1}^{\mathrm{(true)}}}
\newtheorem{lemma}{Lemma}
\newtheorem{theorem}{Theorem}
\title{Projected Wirtinger Gradient Descent for Spectral Compressed Sensing}
\author{Jian-Feng Cai
\thanks{Department of Mathematics, University of Iowa, Iowa City, IA 52242.
Email: \texttt{\{jianfeng-cai,suhui-liu\}@uiowa.edu}.}
\and Suhui Liu $^*$
\and Weiyu Xu
\thanks{Department of Electrical and Computer Engineering, University of Iowa, Iowa City, IA 52242.
Email: \texttt{weiyu-xu@uiowa.edu}.}
}
\begin{document}
\maketitle

\begin{abstract}
This paper considers reconstructing a spectrally sparse signal from a small number of randomly observed time-domain samples. The signal of interest is a linear combination of complex sinusoids at $R$ distinct frequencies. The frequencies can assume any continuous values in the normalized frequency domain $[0,1)$.  After converting the spectrally sparse signal recovery into a low rank structured matrix completion problem, we propose an efficient feasible point approach, named projected Wirtinger gradient descent (PWGD) algorithm, to efficiently solve this structured matrix completion problem. We further accelerate our proposed algorithm by a scheme inspired by FISTA. We give the convergence analysis of our proposed algorithms. Extensive numerical experiments are provided to illustrate the efficiency of our proposed algorithm. Different from earlier approaches, our algorithm can solve problems of very large dimensions very efficiently.
\end{abstract}

\section{Introduction}
Reconstructing a signal from a series of sampling measurements is a common theme in signal processing, which has numerous practical applications in radar, sonar, array processing, wireless communication, seismology, fluorescence microscopy, etc. Because of the constraints imposed by sampling hardware and physical measurement conditions, sometimes we can only obtain partial information, instead of full information, of a signal.  For example, when we try to infer the frequency components of a signal, we may only be able to get a small number of discrete time-domain samples of this signal. In this paper, the signal of interest is a weighted sum of 1-dimensional(1-D) complex sinusoids at $R$ distinct \emph{continuous} frequencies in the unit interval. From a small number of time-domain samples of the superposition of $R$ sinusoids, we are interested in recovering the complete signals, and identifying the existing frequencies. This signal model covers signals in various applications, for example, in acceleration of medical imaging \cite{LDP:MRM:07}, analog-to-digital conversion \cite{TLD:TIT:10}, and inverse scattering in seismic imaging \cite{BPTP:IP:02}.

Early conventional approaches, such as Prony's method \cite{Sch:BOOK:91}, ESPRIT \cite{RK:TASSP:89}, and the matrix pencil method \cite{HS:TASSP:90}, use sampling rates satisfying the Nyquist-Shannon sampling theorem. Compressed sensing (CS) is a new line of work in signal reconstruction, where, if the signal is sparse over some transform domain, the signal may be reconstructed with even fewer samples than the Nyquist sampling theorem requires \cite{CRT:TIT:06,Don:TIT:06}. In conventional compressed sensing, the signal of interest is generally assumed to have a sparse or approximately sparse representation over a finite discrete dictionary. However, signal parameters in practical applications often take values in a continuous domain. For example, in the problem considered in this paper, the frequencies take values in $[0,1)$. One can discretize the continuous signal parameters to a finite set of equi-spaced points, and then apply the theory of CS to recover the discretized parameters. However, when the discretization is not fine enough, this will cause \textit{basis mismatch} \cite{CLPCR:SP:11} in signal recovery. In basis mismatch, we will have non-negligible signal recovery errors resulting from the impact of discretization errors on CS signal recovery procedures, unless we make grid discretization very fine, leading to an undesirably large dictionary for signal recovery, to reduce signal recovery error \cite{TBR:Asilomar:2013}.

Recently there have been growing interests in designing new algorithms which can recover the continuous-valued parameters \emph{precisely} even from a small number of discrete nonuniform time samples. In \cite{CF:CPAM:14}, the authors proposed to use total variation minimization to find the continuous-valued frequencies from equi-spaced samples. In \cite{TBSR:TIT:13}, motivated by atomic norm minimization \cite{CRPW:FoCM:12}, the authors used atomic norm minimization to recover signal frequencies from nonuniform samples. In \cite{CF:CPAM:14} and \cite {TBSR:TIT:13}, the authors convert the signal frequency recovery into a low-rank Toeplitz matrix completion problem. In \cite{CC:TIT:14}, the problem of recovering signal frequencies from nonuniform samples is formulated as a low-rank Hankel matrix completion problem, inspired by Prony's method and the matrix pencil method. Though robust signal recovery is guaranteed theoretically through these methods in \cite{CF:CPAM:14,TBSR:TIT:13,CC:TIT:14}, convex optimization based low-rank structured matrix completions are not computationally efficient- the resulting optimization problems contain $O(N^2)$ unknowns explicitly, where $N$ is the dimension of signal.  To solve the resulting matrix completion problems, off-the-shelf algorithms such as SDPT3 \cite{TTT:OMS:99} use interior point methods which requires computing a Hessian matrix of size $O(N^4)$ in its Newton step. First-order methods, such as alternating direction method of multipliers (ADMM) and proximal point algorithm (PPA), need a dual matrix that is unstructured \cite{FPST:SIMAX:13}, and, consequently, these algorithms require memory of size $O(N^2)$. Therefore, these convex optimization approaches are not suitable for recovering signals of large dimensions.

To efficiently recover high-dimensional signals,  this paper proposes a projected Wirtinger gradient descent (PWGD) method for low-rank Hankel matrix completion. Instead of solving a convex relaxation of the low rank Hankel matrix completion problem, we directly deal with the non-convex low rank structured matrix completion problem. Our proposed PWGD algorithm is a feasible point algorithm, and it uses $O(NR)$ memory. Since the number of sinusoids, $R$,  is usually much smaller than $N$, the proposed algorithm provides efficient large scale signal recovery. Global convergence analysis of our algorithm is provided based upon Attouch and Bolte's theory \cite{ABRS:MOR:10,BST:MP:14}. To speed up our proposed algorithm, an acceleration technique scheme similar to FISTA \cite{BT:SIIMS:09} is given. The practical applicability of our algorithm is validated by numerical experiments, which show our algorithms can recover high-dimensional signals as a superposition of multiple sinusoids.

The paper is organized as follows. In Section \ref{sec:preliminaries}, we describe our signal model, give essential concepts about Hankel matrix, and formulate the signal recovery problem.  Our iterative algorithm and related convergence analysis is present in Section 3, where we also propose ways to accelerate the convergence of our algorithm. In Section 4, some numerical experiments are provided to demonstrate the performance of our algorithm. We then conclude our paper with a discussion of future work.

\section{Problem formulation}\label{sec:preliminaries}
In this section, we give some preliminaries on our signal model and the formulation of the signal reconstruction problem considered in this paper.

\subsection{Signal model}
The signal of our interest $\true{x}(t)$, $t\in\mathbb{R}$, is assumed as a linear combination of complex sinusoids at $R$ distinct frequencies $\true{f_k} \in [0,1)$ for $1 \leq k \leq R$, i.e.,
$$
\true{x}(t)=\sum_{k=1}^{R} \true{d}_k{e^{2\pi \imath \true{f}_k t}}, \quad t\geq0,
$$
where $\imath=\sqrt{-1}$.

Here the frequencies $\true{f}_k$'s are normalized to be in $[0,1)$ so that the signal can be uniquely determined by its time domain samples at integer points, and the associated coefficients $\true{d}_k$'s are the complex amplitudes. This model covers a wide range of signals in wireless communication, biology, automation, imaging science, seismology, etc.

To reconstruct the signal $\true{x}(t)$, early methods (e.g. Prony's method, the matrix pencil method, MUSIC) need time domain samples on uniformly sampled integer time points. More specifically, they use the following $(2N-1)$ samples in the time domain $\true{x}(t)$ at $t=0,1,\ldots, 2N-2$; and  then, in order to get the frequencies of $\true{x}(t)$, these early methods used linear algebra techniques involving linear structured matrices such as Hankel and Toeplitz matrices. However, due to physical measurement limitations, it is usually hard to get all the $2N-1$ samples of $\true{x}(t)$, $t=0,1,\ldots, 2N-1$, especially for signals with very high frequencies (before normalization) \cite{TLD:TIT:10}. So in this paper, we will consider non-uniform sampling in the time domain. We denote the underlying uniformly-sampled true signal as
$$
\true{\bm{x}}=[\true{x}(0),~\true{x}(1),\ldots,~\true{x}(2N-2)]^T\in\mathbb{C}^{2N-1},
$$
where $N$ is a large integer. However, we consider the case where only $M$ ($M<2N-1$) entries of $\true{\bm{x}}$ are observed. In this way, the sampling rate is significantly reduced. The same signal model is also considered in \cite{CC:TIT:14,TBSR:TIT:13,CRT:TIT:06}.

\subsection{Existing Algorithms}
Let $\Theta \subseteq \{0,1,\ldots,2N-2\}$ be the set of indices of observed entries of $\true{\bm{x}}$. Our goal is to reconstruct the true vector $\true{\bm{x}}$ from
\begin{equation}\label{eq:obs}
\bm{y}=\true{\bm{x}}_{\Theta}:=\{\true{x}(t)~|~t\in\Theta\}.
\end{equation}
There are several existing algorithms in the literature for recovering the $R$ sinusoids from the incomplete observations of $\true{\bm{x}}$.

One can discretize the frequency domain $[0,1)$ by uniform grid $\mathcal{G}$ with meshsize $1/(2N-1)$. Assume all frequencies $\true{f}_k$, $k=1,\ldots,R$, are on the grid $\mathcal{G}$. Then, the discrete signal $\true{\bm{x}}$ can be written as $\true{\bm{x}}=\bm{F}^*\bm{c}$, where $\bm{F}^*$ is the inverse of the discrete Fourier transform (DFT) matrix of order $2N-1$, and $\bm{c}\in\mathbb{C}^{2N-1}$ is a sparse vector with non-zero entries at indices $(2N-1)f_k$'s. Then, the samples \eqref{eq:obs} can be written as $\bm{y}=\bm{F}^*_{\Theta}\bm{c}$, where $\bm{F}^*_{\Theta}$ are partial rows of $\bm{F}^*$. Equivalently, our goal has turned into recovering the sparse vector $\bm{c}$. According to the theory of compressed sensing \cite{CRT:TIT:06}, when $\Theta$ is uniformly randomly drawn from all subsets of $\{0,1,\ldots,2N-2\}$ with cardinality $M$, the sparse vector $\bm{c}$ (hence $\true{\bm{x}}$) can be recovered exactly with high probability by solving
\begin{equation}\label{eq:l1fourier}
\min\limits_{\bm{c}}\|\bm{c}\|_1\quad\mbox{s.t.}~\bm{F}^*_{\Theta}\bm{c}=\bm{y},
\end{equation}
provided $M\geq O(K\log N)$. Efficient algorithms for solving \eqref{eq:l1fourier} include Bregman iterations \cite{COS:MC:09,COS:MC:09:2,YOGD:SIIMS:08} and iterative soft-thresholding algorithms \cite{DDD:CPAM:04,BT:SIIMS:09}. When the frequencies $\true{f}_k$'s are not on the grid $\mathcal{G}$, we expect to have a good approximation of $\true{\bm{x}}$ by solving \eqref{eq:l1fourier}, as the differences between the true frequencies and the grid $\mathcal{G}$ can be as small as $O(1/N)$. Unfortunately, this discretization method can lead to large recovery errors \cite{CLPCR:SP:11}. This phenomena is known as basis mismatch of compressed sensing. To overcome this limitation, we will consider frequencies $\true{f}_k$'s on the continuous domain $[0,1)$ instead of discretizing it with a meshsize of $1/(2N-1)$.

 Super-resolution compressed sensing \cite{CF:CPAM:14} and off-the-grid compressed sensing \cite{TBSR:TIT:13} consider reconstructing $\true{\bm{x}}$ from \eqref{eq:obs} under the assumption that the frequencies $\true{f}_k$'s take continuous values in the domain $[0,1)$. The authors of \cite{CF:CPAM:14} and \cite{TBSR:TIT:13} proposed to recover $\true{\bm{x}}$ by solving
\begin{equation}\label{eq:atommin}
\min_{\bm{u},\bm{x},t}~
\frac{u_0}{2(2N-1)}+\frac12t,
\quad\mbox{s.t.}~~
\left[\begin{matrix}\mathcal{T}(\bm{u})&\bm{x}\cr\bm{x}^*&t\end{matrix}\right]\succeq 0,
\end{equation}
where $\mathcal{T}$ is a linear operator that maps a vector $\bm{u}\in\mathbb{C}^{4N-3}$ to a $(2N-1)\times (2N-1)$ Toeplitz matrix $\mathcal{T}(\bm{u})$ satisfying $[\mathcal{T}(\bm{u})]_{jk}=u_{j-k}$ for all $0\leq j,k\leq 2N-2$. It was proved that, when $\Theta$ is uniformly randomly drawn from all the subsets of $\{0,1,\ldots,2N-2\}$ with cardinality $M\geq O(R\log(R/\delta)\log (N/\delta))$, the solution of \eqref{eq:atommin} will match the true discrete signal $\true{\bm{x}}$ with probability at least $1-\delta$. Extensions of the atomic norm minimization to 2D or higher dimensional complex exponentials can be found in \cite{XCMC:ITA:14,MCKX:ArXiv:14}.

Enhanced matrix completion \cite{CC:TIT:14} is another method that is able to reconstruct signals with frequencies taking continuous values. Enhanced matrix completion method converts the signal recovery problem to a Hankel matrix completion problem. Since this method is closely related to our proposed algorithm, we will introduce it in detail in the next subsection.

\subsection{Hankel Matrix Completion}
The enhanced matrix completion in \cite{CC:TIT:14} converts the reconstruction of $\true{\bm{x}}$ from \eqref{eq:obs} to a Hankel matrix completion problem. Let $\mathcal{H}$ be a linear operator that maps a vector in $\mathbb{C}^{2N-1}$ to a $N\times N$ Hankel matrix as follows
$$
\mathcal{H}:~\bm{x}\in\mathbb{C}^{2N-1} \longrightarrow \mathcal{H}\bm{x}\in\mathbb{C}^{N\times N},\qquad
[\mathcal{H}\bm{x}]_{jk}=x_{j+k},\quad 0\leq j,k\leq N-1.
$$
Define $\true{\bm{H}}=\mathcal{H}\true{\bm{x}}$. It can be checked that the rank of $\true{\bm{H}}$ is $R$, due to the following factorization
$$
\mathcal{H}\true{\bm{x}}=
\left[
\begin{matrix}
1&\ldots&1\cr
e^{2\pi\imath\true{f}_1}&\ldots&e^{2\pi\imath\true{f}_R}\cr
\vdots&\vdots&\vdots\cr
e^{2\pi\imath(M-1)\true{f}_1}&\ldots&e^{2\pi\imath(M-1)\true{f}_R}\cr
\end{matrix}
\right]
\left[\begin{matrix}
\true{d}_1\cr &\ddots\cr&&\true{d}_R
\end{matrix}
\right]
\left[
\begin{matrix}
1&e^{2\pi\imath\true{f}_1}\ldots&e^{2\pi\imath(M-1)\true{f}_1}\cr
\vdots&\vdots&\vdots\cr
1&e^{2\pi\imath\true{f}_R}\ldots&e^{2\pi\imath(M-1)\true{f}_R}\cr
\end{matrix}
\right]
$$
Then, instead of constructing the true signal $\true{\bm{x}}$ directly, we reconstruct the rank-$R$ Hankel matrix $\mathcal{H}\true{\bm{x}}$. Since $\mathcal{H}$ is one-to-one from a vector in $\mathbb{C}^{2N-1}$ to an $N\times N$ Hankel matrix, one can easily convert the reconstructed Hankel matrix back to a signal.

Now the signal reconstruction problem is formulated as
\begin{equation}\label{eq:min_knownrank}
\begin{aligned}
& \text{Find} & & \text{matrix } \bm{X}  \\& \text{subject to}& & \text{rank}(\bm{X}) \leq R, \\&&& X_{jk} = \true{H}_{jk}, \; (j,k) \in \Omega, \\&&& \bm{X} \text{ is a Hankel matrix},
\end{aligned}
\end{equation}
where $\Omega=\{(j,k)~|~j+k\in\Theta\}$ is the positions of known entries in $\true{\bm{H}}$. Since $\mathcal{H}$ is one-to-one from $\mathbb{C}^{2N-1}$ to the set of all $N\times N$ Hankel matrix, reconstructing $\true{\bm{x}}$ is equivalent to reconstructing $\true{\bm{H}}$. Following generic low-rank matrix completion \cite{CR:FoCM:09}, \eqref{eq:min_knownrank} is converted in \cite{CC:TIT:14} to a rank minimization problem and further relaxed to
\begin{equation}\label{eq:nucmin}
\min_{\bm{X}}~\|\bm{X}\|_*
\quad\mbox{s.t.}~~~
X_{jk}=\true{H}_{jk},~(j,k)\in\Omega,~~\mbox{and $\bm{X}$ is Hankel}.
\end{equation}
Here $\|\cdot\|_*$ is the sum of all the singular values, namely the nuclear norm. It was shown that, if $\Theta$ is uniformly randomly drawn from all subsets of $\{0,1,\ldots,2N-2\}$ with cardinality $M\geq O(R\log^4N)$, and certain separation conditions between frequencies are satisfied, then the solution of \eqref{eq:nucmin} recover $\true{\bm{H}}$ perfectly with dominant probability. Similar models are considered in \cite{CQXY:ArXiv:15}.

Though \eqref{eq:nucmin} is a convex optimization problem, there were no efficient ways to compute it for large problem dimensions. It has $O(N^2)$ explicit unknowns instead of $O(N)$ in $\true{\bm{x}}$. One may convert \eqref{eq:nucmin} to an SDP and then employ available packages such as SDPT3 \cite{TTT:OMS:99}. However, these packages use second-order methods, which require solving a huge linear system of order $O(N^2)\times O(N^2)$ at each step. Also, it is not straightforward \cite{FPST:SIMAX:13} to adapt nuclear norm minimization algorithms (e.g. \cite{CCS:SIOPT:10}) for generic low-rank matrix completion to solving \eqref{eq:nucmin}, as the Hankel constraint invokes $O(N^2)$ linear equality constraints. The semidefnite programming for atomic norm minimization \eqref{eq:atommin} suffers from the same issue of high computational complexity.

In this paper, instead of considering convex optimizations \eqref{eq:atommin} and \eqref{eq:min_knownrank}, we aim at attacking the original non-convex problem \eqref{eq:min_knownrank} directly. Non-convex algorithms has been proven to have the advantage of fast convergence in sparsity and low-rank reconstruction \cite{BD:ACHA:09,JMD:NIPS:10}. We propose an efficient algorithm based on projected Wirtinger gradient descent for this particular spectral signal recovery problem.

\section{Projected Wirtinger Gradient Algorithm}
In this section, we present our projected Wirtinger gradient algorithm, prove its convergence, and provide an acceleration scheme. Our basic algorithm is a projected gradient flow in the Wirtinger sense, and its convergence is obtained by applying the framework in \cite{ABRS:MOR:10} for proximal alternating minimization. To accelerate the convergence, we use the strategy used in FISTA \cite{BT:SIIMS:09}.

\subsection{Basic algorithm}
This section is devoted to presenting our basic algorithm for solving \eqref{eq:min_knownrank}. Let us define the set of all complex-valued matrices with rank no greater than $R$ as
\begin{equation}\label{eq:setR}
\mathscr{R}^R_{\mathbb{C}}= \{\bm{L}\in \mathbb{C}^{N \times N}|\mathrm{rank}(\bm{L})\leq R\}.
\end{equation}
Similarly, define the set of all complex-valued Hankel matrices that are consistent with the observed data
\begin{equation}\label{eq:setH}
\mathscr{H}=\{\mathcal{H}\bm{x}~|~\bm{x}\in\mathbb{C}^{2N-1},~\bm{x}_{\Theta}=\true{\bm{x}}_{\Theta}\}.
\end{equation}
The set $\mathscr{R}^R_{\mathbb{C}}$ is a smooth manifold and $\mathscr{H}$ is an affine space. Then, our signal recovery problem and also \eqref{eq:min_knownrank} can be formulated as the following optimization problem
\begin{equation}\label{eq:minRH}
\min_{\bm{L} \in \mathscr{R}^R_{\mathbb{C}},\bm{H}\in \mathscr{H}}~~ \frac{1}{2}{\|\bm{L}-\bm{H}\|^2_F}
\end{equation}

We will employ a projected gradient descent algorithm to solve \eqref{eq:minRH}. The objective $F(\bm{L},\bm{H}):=\frac{1}{2}\|\bm{L}-\bm{H}\|^2_F$ is a real-valued function with complex variables, which is not differentiable in the ordinary complex calculus sense. Nevertheless, $F(\bm{L},\bm{H})$ is differentiable with respect to the real and imaginary parts of its variables. Thus, our gradient flow is performed on the real and imaginary parts respectively. Denote
$$
\bm{Z}=\left[\begin{matrix}\bm{L}\cr\bm{H}\end{matrix}\right]=\Re+\imath\Im
$$
where $\Re$ and $\Im$ are the real and imaginary parts of $\bm{Z}$. Rewrite $F$ as $F(\Re,\Im)$. Then, in our gradient flow algorithm, $\Re$ is updated by $\frac{\partial F}{\partial\Re}$ and $\Im$ by $\frac{\partial F}{\partial\Im}$. In other words, $\bm{Z}$ is updated by $\frac{\partial F}{\partial\Re}+\imath\frac{\partial F}{\partial\Im}$. By Wirtinger calculus \cite{Fis:BOOK:05}, we have the relation
$$
\frac{\partial F}{\partial\Re}+\imath\frac{\partial F}{\partial\Im} =2\frac{\partial{F}}{\partial{\overline{\bm{Z}}}}.
$$
Direct calculations give
$$
2\frac{\partial{F}}{\partial{\overline{\bm{Z}}}}=
\left[\begin{matrix}2\frac{\partial{F}}{\partial\overline{\bm{L}}}\cr
2\frac{\partial{F}}{\partial\overline{\bm{H}}}\end{matrix}\right]
=\left[\begin{matrix}\bm{L}-\bm{H}\cr
\bm{H}-\bm{L}\end{matrix}\right].
$$
Using the Wirtinger gradient, our proposed algorithm is given as follows: at iteration $t$, we have
\begin{equation}\label{eq:alg_PWGD}
 \left\{ \begin{array}{l l}
 \bm{L}_{t+1}\in \mathcal{P}_{\mathscr{R}^R_\mathbb{C}}(\bm{L}_t-\delta_1(\bm{L}_t-\bm{H}_t)),\\
     \bm{H}_{t+1} \in \mathcal{P}_{\mathscr{H}}(\bm{H}_t -\delta_2(\bm{H}_t-\bm{L}_{t+1})),\\
    \end{array} \right.
\end{equation}
where $\delta_1>0$ and $\delta_2>0$ are step sizes, and $\mathcal{P}_{\mathscr{R}^R_\mathbb{C}}$ and $\mathcal{P}_{\mathscr{H}}$ are projections onto $\mathscr{R}^R_\mathbb{C}$ and $\mathscr{H}$ respectively. We call \eqref{eq:alg_PWGD} projected Wirtinger gradient descent (PWGD).

It remains to find out $\mathcal{P}_{\mathscr{R}^R_\mathbb{C}}$ and $\mathcal{P}_{\mathscr{H}}$ respectively. Since $\mathcal{P}_{\mathscr{R}^R_\mathbb{C}}(\bm{X})$ is the best rank-$R$ approximation to $\bm{X}$, according to Eckhart-Young Theorem \cite{GV:BOOK:96},
$$
\mathcal{P}_{\mathscr{R}^R_\mathbb{C}}(\bm{X})=\bm{U}_R\bm{\Sigma}_R\bm{V}_R^*,
$$
where the columns of $\bm{U}_R$ and $\bm{V}_R$ are the first $R$ left and right singular vectors of $\bm{X}$ respectively and $\bm{\Sigma}_R$ is a diagonal matrix with diagonals corresponding singular values.
The closed form of $\mathcal{P}_{\mathscr{H}}$ is given by the following lemma
\begin{lemma}\label{lem:projH}
We have
\begin{equation}\label{eq:projH}
\mathcal{P}_{\mathscr{H}}(\bm{X})=\mathcal{H}\bm{z},
\quad\mbox{where}\quad
z_j=
\begin{cases}
\true{x}_j,&\mbox{if }j\in\Theta,\cr
\mathrm{mean}\{X_{kl}~|~k+l=j\},&\mbox{otherwise.}
\end{cases}
\end{equation}
\end{lemma}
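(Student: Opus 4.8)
The plan is to use the fact that $\mathscr{H}$ is an affine subspace of $\mathbb{C}^{N\times N}$ (equipped with the real inner product $\langle\bm{A},\bm{B}\rangle=\Re\,\mathrm{tr}(\bm{A}^*\bm{B})$ inducing $\|\cdot\|_F$), so that $\mathcal{P}_{\mathscr{H}}(\bm{X})=\argmin_{\bm{H}\in\mathscr{H}}\tfrac12\|\bm{X}-\bm{H}\|_F^2$ is well defined and unique, and then to reduce the minimization to independent scalar least-squares problems, one per anti-diagonal. First I would pass from Hankel matrices to their generating vectors: every $\bm{H}\in\mathscr{H}$ equals $\mathcal{H}\bm{z}$ for a unique $\bm{z}\in\mathbb{C}^{2N-1}$, and the constraint defining $\mathscr{H}$ says precisely that $z_j=\true{x}_j$ for all $j\in\Theta$, while the coordinates $\{z_j : j\notin\Theta\}$ are free. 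So it suffices to minimize over those free coordinates.

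Next I would decompose the Frobenius distance along anti-diagonals. Setting $\ell=j+k$ and using $[\mathcal{H}\bm{z}]_{jk}=z_{j+k}$,
\begin{equation*}
\|\bm{X}-\mathcal{H}\bm{z}\|_F^2=\sum_{j=0}^{N-1}\sum_{k=0}^{N-1}|X_{jk}-z_{j+k}|^2=\sum_{\ell=0}^{2N-2}\Big(\sum_{\substack{j+k=\ell\\ 0\le j,k\le N-1}}|X_{jk}-z_\ell|^2\Big).
\end{equation*}
Since each pair $(j,k)$ contributes to exactly one value of $\ell$, and $z_\ell$ appears in exactly one inner sum, the objective splits as a sum of terms that are independent across $\ell$; this separability is exactly what the Hankel structure buys us. For $\ell\in\Theta$ the $\ell$-th term is a constant (as $z_\ell=\true{x}_\ell$ is fixed), and for $\ell\notin\Theta$ we are left with the unconstrained problem of minimizing $g_\ell(z_\ell):=\sum_{j+k=\ell}|X_{jk}-z_\ell|^2$ over $z_\ell\in\mathbb{C}$.

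Finally I would solve each scalar subproblem. For any finite family $a_1,\dots,a_m\in\mathbb{C}$, the map $z\mapsto\sum_{i=1}^m|a_i-z|^2$ is strictly convex and, by completing the square (equivalently, by setting the Wirtinger derivative $\sum_i(z-a_i)$ to zero), has its unique minimizer at the arithmetic mean $\tfrac1m\sum_i a_i$. Applying this with $\{a_i\}=\{X_{kl} : k+l=\ell\}$—a nonempty set for every $\ell\in\{0,\dots,2N-2\}$, so the mean is well defined—gives $z_\ell=\mathrm{mean}\{X_{kl} : k+l=\ell\}$ for each $\ell\notin\Theta$, which together with $z_j=\true{x}_j$ on $\Theta$ is exactly \eqref{eq:projH}. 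I do not expect a genuine obstacle: the only point requiring a little care is the bookkeeping in the anti-diagonal decomposition—verifying that the inner sums partition all entries of $\bm{X}$ and that each anti-diagonal is nonempty—plus the observation that the Hankel constraint is what makes the problem separable. As an alternative to the parametrization argument, one could instead verify the closed form directly via the orthogonality characterization $\langle\bm{X}-\mathcal{H}\bm{z},\,\mathcal{H}\bm{w}\rangle=0$ for all $\bm{w}$ supported off $\Theta$, which collapses to the same per-anti-diagonal zero-mean condition.
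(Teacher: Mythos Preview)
Your proposal is correct and follows essentially the same route as the paper: parametrize $\mathscr{H}$ by vectors $\bm{z}$, rewrite $\|\bm{X}-\mathcal{H}\bm{z}\|_F^2$ as a sum over anti-diagonals, and observe that the minimization separates into independent scalar problems whose solutions are the anti-diagonal means. The paper's proof is terser (it simply asserts the last step is ``obvious''), whereas you spell out why the mean minimizes $\sum_i|a_i-z|^2$ and note the well-definedness of the projection; your use of $|\cdot|^2$ is also more careful than the paper's $(\cdot)^2$ in the complex setting.
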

\begin{proof}
$\mathcal{P}_{\mathscr{H}}(\bm{X})$ is the solution of the following least square problem
\begin{equation*}\label{eq:lem:projH:10}
\begin{split}
\mathcal{P}_{\mathscr{H}}(\bm{X})&=\argmin_{\bm{Z}}\{\|\bm{Z}-\bm{X}\|_F^2~:~\bm{X}\in\mathscr{H}\}
=\mathcal{H}\cdot\argmin_{\bm{z}}\{\|\mathcal{H}\bm{z}-\bm{X}\|_F^2~:~\bm{z}_{\Theta}=\true{\bm{x}}_{\Theta}\}\cr
&=\mathcal{H}\cdot\argmin_{\bm{z}}\{\sum_{j=0}^{2N-2}\sum_{k+l=j}(z_j-X_{kl})^2~:~\bm{z}_{\Theta}=\true{\bm{x}}_{\Theta}\}.
\end{split}
\end{equation*}
It is obvious that the solution of the optimization problem in the last line is given by $z_j$ in \eqref{eq:projH}.
\end{proof}

The proposed PWGD algorithm \eqref{eq:alg_PWGD} is a feasible point algorithm. The iterates $\bm{L}_t$ and $\bm{H}_t$ are always in their feasible sets $\mathscr{R}^R_\mathbb{C}$ and $\mathscr{H}$ respectively. This property can significantly reduce the computational cost and storage, when $R$ is small compared to $N$. Since $\bm{L}_t\in\mathscr{R}^R_\mathbb{C}$, it is stored in a factorization form and only $O(NR)$ memory is necessary. Also, the Hankel matrix $\bm{H}_t$ can be represented by its parameters, which is of size only $O(N)$. Furthermore, in Step 1 of \eqref{eq:alg_PWGD}, it needs to compute only the first $R$ singular values and their corresponding singular vectors of $\bm{L}_t-\delta_1(\bm{L}_t-\bm{H}_t)$ in the computation of the projection. This can be done by, e.g., Krylov subspace methods, which invokes only the matrix-vector product of $\bm{L}_t-\delta_1(\bm{L}_t-\bm{H}_t)$. For the matrix-vector product of $\bm{L}_t$, since $\bm{L}_t$ is rank $R$ and in a factorization form, it can be done in $O(NR)$ operations. The matrix-vector product of the Hankel matrix $\bm{H}_t$ is implemented by fast Fourier transform \cite{GV:BOOK:96}, which needs only $O(N\log N)$ operations. Step 2 of \eqref{eq:alg_PWGD} needs averages of $\bm{L}_{t+1}$ along anti-diagonals.

\subsection{Convergence}
In this subsection, we prove the convergence of the proposed PWGD algorithm \eqref{eq:alg_PWGD}. Our proof is achieved by applying the convergence result in \cite{ABRS:MOR:10}.

Consider a general non-convex optimization problem
\begin{equation}\label{eq:gmin}
\min_{\bm{x},\bm{y}} \psi(\bm{x},\bm{y}):=\phi(\bm{x},\bm{y})+\theta(\bm{x})+\omega(\bm{y}),
\end{equation}
where the functions $\theta~:~\mathbb{R}^n\mapsto\mathbb{R}\cup\{+\infty\}$ and $\omega~:~\mathbb{R}^m\mapsto\mathbb{R}\cup\{+\infty\}$ are proper lower semicontinuous functions and $\phi~:~\mathbb{R}^n\times\mathbb{R}^m\mapsto\mathbb{R}$ is a $C^1$ function. It was proposed in \cite{ABRS:MOR:10} a proximal alternating minimization algorithm for solving \eqref{eq:gmin}
\begin{equation}\label{eq:PAMA}
\begin{cases}
\bm{x}_{k+1}\in\argmin_{\bm{x}\in\mathbb{R}^n}\psi(\bm{x},\bm{y}_k)+\frac{1}{2\lambda_k}\|\bm{x}-\bm{x}_k\|_2^2,\cr
\bm{y}_{k+1}\in\argmin_{\bm{y}\in\mathbb{R}^m}\psi(\bm{x}_{k+1},\bm{y})+\frac{1}{2\mu_k}\|\bm{y}-\bm{y}_k\|_2^2.
\end{cases}
\end{equation}
Under the assumption that the function $\psi$ satisfies the so-called Kurdyka-Lojasiewicz (KL) condition and $\nabla\phi$ is Lipschitz on bounded sets, \cite{ABRS:MOR:10} proved the convergence of \eqref{eq:PAMA}. Generally, the KL condition is not easy to check. A sufficient condition to guarantee the KL condition is the semi-algebraic property. A proper and lower semi-continuous function is called semi-algebraic if its graph is a semi-algebraic set. Recall a subset $S\subset\mathbb{R}^d$ is a real semi-algebraic set if there exists a finite number of real polynomial function $g_{ij},h_{ij}:~\mathbb{R}^d\mapsto\mathbb{R}$ such that
$$
S=\bigcup_{j=1}^p\bigcap_{i=1}^q\left\{\bm{u}\in\mathbb{R}^d~\big|~g_{ij}(\bm{u})=0,~h_{ij}(\bm{u})<0\right\}.
$$

Choose $\theta=\delta_{C}$ and $\omega=\delta_D$ are indicator functions for the sets $C\in\mathbb{R}^n$ and $D\in\mathbb{R}^m$ respectively. Recall the indicator function $\delta_C$ of a set $C$ is defined as $\delta_C(\bm{x})=\begin{cases}0,&\mbox{if }\bm{x}\in C,\cr +\infty,&\mbox{if }\bm{x}\not\in C\end{cases}$. Let $\phi(\bm{x},\bm{y})=\frac12\|\bm{x}-\bm{y}\|_2^2$. Then \eqref{eq:PAMA} becomes an alternating projection algorithm
\begin{equation}\label{eq:altproj}
\begin{cases}
\bm{x}_{k+1}\in\mathcal{P}_C\left(\bm{x}_k-\frac{1}{1+\delta_k}(\bm{x}_k-\bm{y}_k)\right),\cr
\bm{y}_{k+1}\in\mathcal{P}_D\left(\bm{y}_k-\frac{1}{1+\mu_k}(\bm{y}_k-\bm{x}_{k+1})\right).
\end{cases}
\end{equation}
The results in \cite{ABRS:MOR:10} imply the following convergence theorem of \eqref{eq:altproj}, which is a corollary of Corollary 12 of \cite{ABRS:MOR:10} and Theorem 3 and Example 2 of \cite{BST:MP:14}.
\begin{theorem}\label{thm:convPAMA}
Assume that the sets $C\subset\mathbb{R}^n$ and $D\subset\mathbb{R}^m$ are semi-algebraic. Let $(\bm{x}_k,\bm{y}_k)$ be generated by \eqref{eq:altproj} with $0<a<\delta_k,\mu_k<b$ for all $k$.
\begin{itemize}
\item[(a)] Either $\|(\bm{x}_k,\bm{y}_k)\|_2\to\infty$ as $k\to\infty$, or $(\bm{x}_k,\bm{y}_k)$ converges to a critical point of $\psi$.
\item[(b)] If we further assume $(\bm{x}_0,\bm{y}_0)$ is feasible and sufficiently close to a global minimizer of $\psi$, then $(\bm{x}_0,\bm{y}_0)$ converges to a global minimizer of $\psi$.
\end{itemize}
\end{theorem}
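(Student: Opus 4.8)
The plan is to recognize Theorem~\ref{thm:convPAMA} as a special instance of the convergence theory for proximal alternating minimization developed in \cite{ABRS:MOR:10,BST:MP:14}, so that the only work is to check that \eqref{eq:altproj} is exactly the scheme \eqref{eq:PAMA} for the data $\phi(\bm x,\bm y)=\tfrac12\|\bm x-\bm y\|_2^2$, $\theta=\delta_C$, $\omega=\delta_D$, and that the hypotheses of the cited theorems are met. (I assume throughout, as holds in the application $C=\mathscr{R}^R_\mathbb{C}$, $D=\mathscr{H}$, that $C$ and $D$ are nonempty and closed, so the projections in \eqref{eq:altproj} are nonempty-valued.) First I would record the structural facts: $\phi$ is $C^\infty$ with $\nabla\phi(\bm x,\bm y)=(\bm x-\bm y,\,\bm y-\bm x)$, which is linear and hence globally Lipschitz, and $\phi$ is a polynomial, hence semi-algebraic; since $C,D$ are nonempty closed sets, $\theta=\delta_C$ and $\omega=\delta_D$ are proper and lower semicontinuous, and since $C,D$ are semi-algebraic their indicators are semi-algebraic functions (Example~2 of \cite{BST:MP:14}). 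Therefore $\psi=\phi+\theta+\omega$ is proper, lower semicontinuous, and semi-algebraic, and in particular satisfies the Kurdyka--{\L}ojasiewicz inequality at every point of its domain.

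Next I would verify that \eqref{eq:PAMA} collapses to \eqref{eq:altproj} for this data. Dropping the additive term that is constant in the variable being optimized and completing the square,
\[
\argmin_{\bm x\in C}\Big\{\tfrac12\|\bm x-\bm y_k\|_2^2+\tfrac{1}{2\lambda_k}\|\bm x-\bm x_k\|_2^2\Big\}
=\mathcal{P}_C\!\Big(\bm x_k-\tfrac{1}{1+1/\lambda_k}(\bm x_k-\bm y_k)\Big),
\]
and the $\bm y$-update is handled identically. Thus \eqref{eq:PAMA} is \eqref{eq:altproj} once the proximal parameters of \eqref{eq:PAMA} are identified with the reciprocals of $\delta_k$ and $\mu_k$, and the bound $0<a<\delta_k,\mu_k<b$ becomes the requirement that the proximal parameters of \eqref{eq:PAMA} lie in a fixed compact subinterval of $(0,\infty)$, which is precisely what Corollary~12 of \cite{ABRS:MOR:10} and Theorem~3 of \cite{BST:MP:14} ask for.

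For part (a), I would use the elementary fact that a sequence in a finite-dimensional space either satisfies $\|(\bm x_k,\bm y_k)\|_2\to\infty$ or possesses a cluster point. In the second case the cited convergence theorem applies directly: the iterates are feasible, so $\psi$ coincides with the continuous function $\phi$ along the sequence and at any cluster point (this supplies the subsequential convergence of function values that the argument needs), the sufficient-decrease and relative-error inequalities of the proximal alternating scheme hold because of the step-size bounds, and the KL property established above then forces the sequence to have finite length and to converge to a critical point of $\psi$. For part (b), I would invoke the quantitative ``local capture'' version of the KL argument around a global minimizer $\bm z^\star=(\bm x^\star,\bm y^\star)$: combining the sufficient-decrease estimate (which in particular bounds $\|\bm z_1-\bm z_0\|_2$ by a multiple of $\sqrt{\psi(\bm z_0)-\psi(\bm z^\star)}$) with the desingularizing function $\varphi$ of the KL inequality at $\bm z^\star$ bounds the whole trajectory length by a quantity that tends to $0$ as $\bm z_0\to\bm z^\star$ --- here feasibility of $\bm z_0$ and continuity of $\psi$ on $C\times D$ make the gap $\psi(\bm z_0)-\psi(\bm z^\star)$ small --- so that for $\bm z_0$ close enough to $\bm z^\star$ the iterates never leave the KL neighborhood and converge to some $\bar{\bm z}$ in it with $\psi(\bm z^\star)\le\psi(\bar{\bm z})=\lim_k\psi(\bm z_k)\le\psi(\bm z_0)$. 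Finally $\bar{\bm z}$ is a critical point by part (a); if $\psi(\bar{\bm z})>\psi(\bm z^\star)$ then, since also $\psi(\bar{\bm z})<\psi(\bm z^\star)+\eta$ when $\bm z_0$ is close to $\bm z^\star$, the KL inequality at $\bm z^\star$ applied at $\bar{\bm z}$ gives $\varphi'(\psi(\bar{\bm z})-\psi(\bm z^\star))\,\mathrm{dist}(0,\partial\psi(\bar{\bm z}))\ge 1$, contradicting $\mathrm{dist}(0,\partial\psi(\bar{\bm z}))=0$; hence $\psi(\bar{\bm z})=\psi(\bm z^\star)=\min\psi$ and $\bar{\bm z}$ is a global minimizer.

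The main obstacle is less a single deep step than the careful bookkeeping of matching \eqref{eq:altproj} to the hypotheses of \cite{ABRS:MOR:10,BST:MP:14} with the correct step-size range, together with confirming that the sufficient-decrease and relative-error estimates underlying the finite-length and capture arguments survive the use of non-constant proximal parameters $\delta_k,\mu_k$; the only genuinely new reasoning is the short contradiction at the end of part (b) that upgrades ``converges to a critical point lying near a global minimizer'' to ``converges to a global minimizer.''
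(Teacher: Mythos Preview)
Your proposal is correct and follows exactly the paper's own approach: the paper does not give a standalone proof but simply declares Theorem~\ref{thm:convPAMA} to be ``a corollary of Corollary 12 of \cite{ABRS:MOR:10} and Theorem 3 and Example 2 of \cite{BST:MP:14},'' and your write-up supplies precisely the verification that \eqref{eq:altproj} is \eqref{eq:PAMA} for $\phi(\bm x,\bm y)=\tfrac12\|\bm x-\bm y\|_2^2$, $\theta=\delta_C$, $\omega=\delta_D$, together with the semi-algebraic/KL and step-size hypotheses those cited results require. The short contradiction you add at the end of part~(b) is a harmless elaboration of the local-capture statement already contained in \cite{ABRS:MOR:10}, not a genuinely different route.
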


Next we apply Theorem \ref{thm:convPAMA} to the PWGD algorithm \eqref{eq:alg_PWGD} to get its convergence. The PWGD algorithm \eqref{eq:alg_PWGD} is in the same form as \eqref{eq:altproj}. However, our PWGD algorithm is performed in complex-valued matrix spaces, while the setting of Theorem \ref{thm:convPAMA} is in real. Nevertheless, we can identify any complex-valued matrix to a real one by concatenating its real and imaginary parts. Actually, as aforementioned, our Writinger gradient descent is exactly obtained in this way by considering the gradient with respect to the real and imaginary parts. Since the objective function $F(\bm{L},\bm{H})$ in \eqref{eq:minRH} does not change after this identification, we only to check the sets $\mathscr{R}^R_{\mathbb{C}}$ in \eqref{eq:setR} and $\mathscr{H}$ in \eqref{eq:setH} are semi-algebraic when viewed as sets of real and imaginary parts. This is done by the following two lemmas.
\begin{lemma}\label{lem:setR}
The set $\mathscr{S}_R$ defined as follows is a semi-algebraic set
$$
\mathscr{S}_R=\left\{[\bm{X},\bm{Y}]~\big|~(\bm{X}+\imath\bm{Y})\in\mathscr{R}^R_{\bm{C}}\right\}.
$$
\end{lemma}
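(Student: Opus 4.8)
The plan is to exhibit $\mathscr{S}_R$ explicitly as a finite union of intersections of zero sets and strict-inequality sets of real polynomials in the real coordinates $(\bm{X},\bm{Y})$, which is precisely the definition of a real semi-algebraic set recalled just above. The key observation is that the rank constraint $\mathrm{rank}(\bm{X}+\imath\bm{Y})\le R$ can be described by vanishing of minors. Concretely, a matrix $\bm{L}\in\mathbb{C}^{N\times N}$ has rank at most $R$ if and only if every $(R+1)\times(R+1)$ minor of $\bm{L}$ is zero. Each such minor is a polynomial (with complex coefficients, in fact integer coefficients) in the entries of $\bm{L}$, hence writing $\bm{L}=\bm{X}+\imath\bm{Y}$ it becomes a polynomial in the real entries $X_{jk},Y_{jk}$ whose value is a complex number; splitting that value into its real and imaginary parts yields two \emph{real} polynomials $g^{\mathrm{Re}}_m(\bm{X},\bm{Y})$ and $g^{\mathrm{Im}}_m(\bm{X},\bm{Y})$, where $m$ indexes the finitely many $(R+1)\times(R+1)$ submatrices (there are $\binom{N}{R+1}^2$ of them, a finite number). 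The minor vanishes as a complex number exactly when both real polynomials vanish.

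The steps, in order, are as follows. First, identify $\mathbb{C}^{N\times N}$ with $\mathbb{R}^{2N^2}$ via $\bm{L}\mapsto[\bm{X},\bm{Y}]=[\Re\bm{L},\Im\bm{L}]$, so that $\mathscr{S}_R$ is by definition the image of $\mathscr{R}^R_{\mathbb{C}}$ under this identification. Second, invoke the standard characterization of rank by minors to write
$$
\mathscr{S}_R=\bigcap_{m}\left\{[\bm{X},\bm{Y}]\in\mathbb{R}^{2N^2}~\big|~g^{\mathrm{Re}}_m(\bm{X},\bm{Y})=0,~g^{\mathrm{Im}}_m(\bm{X},\bm{Y})=0\right\},
$$
the intersection being over the finitely many choices of $(R+1)\times(R+1)$ submatrices. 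Third, check that each $g^{\mathrm{Re}}_m$ and $g^{\mathrm{Im}}_m$ is genuinely a real polynomial: a determinant is a polynomial function of the entries with integer coefficients, substituting $L_{jk}=X_{jk}+\imath Y_{jk}$ and expanding keeps it polynomial in the real variables, and taking real and imaginary parts only regroups terms. Fourth, note this is a special case of the defining form of a semi-algebraic set: take $p=1$ (a single union term), $q=2\binom{N}{R+1}^2$ equality constraints given by the $g^{\mathrm{Re}}_m,g^{\mathrm{Im}}_m$, and no inequality constraints (or, to match the template literally, append the trivially satisfied strict inequality $h\equiv -1<0$). This completes the argument.

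There is no serious obstacle here; the statement is essentially a bookkeeping exercise once one recalls that determinantal varieties are algebraic. The only point requiring a moment of care is the passage from "complex polynomial equation" to "pair of real polynomial equations," i.e.\ verifying that if $P(\bm{z})$ is a polynomial with complex coefficients in complex variables $\bm{z}=\bm{x}+\imath\bm{y}$, then both $\Re P$ and $\Im P$ are polynomials in the real variables $(\bm{x},\bm{y})$ — this follows because $\Re$ and $\Im$ act linearly on sums and products expand via $\imath^2=-1$, so the real and imaginary parts of each monomial are themselves real polynomials. I would also remark that the companion Lemma for $\mathscr{H}$ is even easier, since $\mathscr{H}$ is an affine subspace cut out by linear equations (the Hankel structure equalities and the data-consistency equalities $x_j=\true{x}_j$ for $j\in\Theta$), hence trivially semi-algebraic, so that Theorem~\ref{thm:convPAMA} applies to the PWGD iteration~\eqref{eq:alg_PWGD}.
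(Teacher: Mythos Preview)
Your argument is correct: the determinantal description of $\{\mathrm{rank}\le R\}$ via vanishing $(R+1)\times(R+1)$ minors, followed by splitting each complex minor into its real and imaginary parts, directly exhibits $\mathscr{S}_R$ as a real algebraic variety in $\mathbb{R}^{2N^2}$, hence semi-algebraic. The care you take in justifying that $\Re P$ and $\Im P$ are genuine real polynomials is exactly the only non-trivial bookkeeping step.

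The paper's proof takes a genuinely different route. Rather than working with minors, it passes through the real $2N\times 2N$ block representation
\[
\bm{X}+\imath\bm{Y}\;\longleftrightarrow\;\begin{bmatrix}\bm{X}&-\bm{Y}\\\bm{Y}&\bm{X}\end{bmatrix},
\]
and proves via an SVD argument that $\mathrm{rank}(\bm{X}+\imath\bm{Y})=r$ if and only if the real block matrix has rank $2r$. It then invokes the known fact (cited from \cite{BST:MP:14}) that the set of real matrices of a given rank is semi-algebraic, intersects with the linear subspace of matrices having the block structure, and finally writes $\mathscr{S}_R=\bigcup_{r=0}^R\mathscr{P}_r$. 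Your approach is more self-contained and shorter, since it avoids both the SVD pairing argument and the external citation (which, in the end, is itself proved by the same minor-vanishing idea you use). The paper's approach, on the other hand, establishes the structural equivalence $\mathscr{P}_r=\mathscr{Q}_r$ as a byproduct, which connects the complex rank constraint to a real rank constraint in a way that could be independently useful, for instance if one wanted to compute the projection $\mathcal{P}_{\mathscr{R}^R_{\mathbb{C}}}$ using only real SVDs.
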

\begin{proof}
Denote
$$
\mathscr{P}_r=\{[\bm{X},\bm{Y}]~|~\bm{X},\bm{Y}\in\mathbb{R}^{N\times N},~\mathrm{rank}(\bm{X}+\imath\bm{Y})=r\}
$$
and
$$
\mathscr{Q}_r=\left\{[\bm{X},\bm{Y}]~|~\bm{X},\bm{Y}\in\mathbb{R}^{N\times N},~\mathrm{rank}\left(\left[\begin{matrix}\bm{X}&-\bm{Y}\cr\bm{Y}&\bm{X}\end{matrix}\right]\right)=2r\right\}.
$$
We first prove $\mathscr{P}_r=\mathscr{Q}_r$ by showing $\mathscr{P}_r\subset\mathscr{Q}_r$ and $\mathscr{Q}_r\subset\mathscr{P}_r$ respectively. Let $[\bm{X},\bm{Y}]\in\mathscr{P}_r$, and a singular value decomposition (SVD) of $\bm{X}+\imath\bm{Y}$ is $(\bm{X}+\imath\bm{Y})=(\bm{U}_{\mathrm{Re}}+\imath \bm{U}_{\mathrm{Im}})\bm{\Sigma}(\bm{V}_{\mathrm{Re}}+\imath \bm{V}_{\mathrm{Im}})^*$, where $\bm{U}_{\mathrm{Re}},\bm{U}_{\mathrm{Im}},\bm{V}_{\mathrm{Re}},\bm{V}_{\mathrm{Im}}\in\mathbb{R}^{N\times r}$ and $\bm{\Sigma}\in\mathbb{R}^{r\times r}$. Then, by direct calculation, we see that an SVD of $\left[\begin{matrix}\bm{X}&-\bm{Y}\cr\bm{Y}&\bm{X}\end{matrix}\right]$ is given by
\begin{equation}\label{eq:SVDreal}
\left[\begin{matrix}\bm{X}&-\bm{Y}\cr\bm{Y}&\bm{X}\end{matrix}\right]
=\left[\begin{matrix}
  \bm{U}_{\mathrm{Re}} & -\bm{U}_{\mathrm{Im}} \cr
  \bm{U}_{\mathrm{Im}}  & \bm{U}_{\mathrm{Re}} \cr
  \end{matrix}\right]
  \left[\begin{matrix}
  \bm{\Sigma} & \bm{0} \cr
  \bm{0}  & \bm{\Sigma} \cr
  \end{matrix}\right]
\left[\begin{matrix}
  \bm{V}_{\mathrm{Re}} & -\bm{V}_{\mathrm{Im}} \cr
  \bm{V}_{\mathrm{Im}}  & \bm{V}_{\mathrm{Re}} \cr
  \end{matrix}\right]^*.
\end{equation}
Therefore, $\mathrm{rank}\left(\left[\begin{matrix}\bm{X}&-\bm{Y}\cr\bm{Y}&\bm{X}\end{matrix}\right]\right)=2r$, which implies $[\bm{X},\bm{Y}]\in\mathscr{Q}_r$ and further $\mathscr{P}_r\subset\mathscr{Q}_r$. Conversely, let $[\bm{X},\bm{Y}]\in\mathscr{Q}_r$. If $\left(\sigma,\left[\begin{matrix}\bm{u}_1\cr\bm{u}_2\end{matrix}\right],\left[\begin{matrix}\bm{v}_1\cr\bm{v}_2\end{matrix}\right]\right)$
is a singular triplet of $\left[\begin{matrix}\bm{X}&-\bm{Y}\cr\bm{Y}&\bm{X}\end{matrix}\right]$, then
$\left(\sigma,\left[\begin{matrix}-\bm{u}_2\cr\bm{u}_1\end{matrix}\right],\left[\begin{matrix}-\bm{v}_2\cr\bm{v}_1\end{matrix}\right]\right)$
is too by direct calculation. Therefore, the multiplicity of each singular value is even, and SVD's of $\left[\begin{matrix}\bm{X}&-\bm{Y}\cr\bm{Y}&\bm{X}\end{matrix}\right]$ must be in the form of \eqref{eq:SVDreal}. Consequently,  $(\bm{X}+\imath\bm{Y})=(\bm{U}_{\mathrm{Re}}+\imath \bm{U}_{\mathrm{Im}})\bm{\Sigma}(\bm{V}_{\mathrm{Re}}+\imath \bm{V}_{\mathrm{Im}})^*$ is an SVD of $(\bm{X}+\imath\bm{Y})$, which implies $\mathrm{rank}(\bm{X}+\imath\bm{Y})=r$. Therefore, $[\bm{X},\bm{Y}]\in\mathscr{P}_r$. Thus, $\mathscr{Q}_r\subset\mathscr{P}_r$.

Since $\mathscr{Q}_r$ is the intersection of the set of all rank-$2r$ real-valued matrices and the linear subspace of matrices in the form of $\left[\begin{matrix}\bm{X}&-\bm{Y}\cr\bm{Y}&\bm{X}\end{matrix}\right]$, it is deducted from \cite[Example 2]{BST:MP:14} that $\mathscr{Q}_r$ is a semi-algebraic set. This together with $\mathscr{P}_r=\mathscr{Q}_r$ implies $\mathscr{P}_r$ is a semi-algebraic set too.

Finally, it is obvious that $\mathscr{S}_R=\bigcup_{r=0}^{R}\mathscr{P}_r$. Therefore, $\mathscr{S}_R$ is a semi-algebraic set.
\end{proof}

\begin{lemma}\label{lem:setH}
The set $\mathscr{K}$ defined as follows is a semi-algebraic set
$$
\mathscr{K}=\left\{[\bm{X},\bm{Y}]~\big|~(\bm{X}+\imath\bm{Y})\in\mathscr{H}\right\}.
$$
\end{lemma}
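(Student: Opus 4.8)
The plan is to observe that $\mathscr{H}$ in \eqref{eq:setH} is an affine subspace of $\mathbb{C}^{N\times N}$, so that $\mathscr{K}$, obtained by splitting real and imaginary parts, is an affine subspace of $\mathbb{R}^{N\times N}\times\mathbb{R}^{N\times N}\cong\mathbb{R}^{2N^2}$; and affine subspaces are about the simplest semi-algebraic sets there are.

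First I would unwind the definition of $\mathscr{H}$. A matrix $\bm{W}\in\mathbb{C}^{N\times N}$ belongs to $\mathscr{H}$ if and only if (i) $\bm{W}$ is Hankel, i.e.\ $W_{jk}=W_{j'k'}$ whenever $j+k=j'+k'$, and (ii) the anti-diagonals of $\bm{W}$ indexed by $\Theta$ carry the observed data, i.e.\ $W_{kl}=\true{x}_j$ for every $j\in\Theta$ and every pair $(k,l)$ with $k+l=j$. Both families consist of finitely many affine (indeed linear, except for the inhomogeneous data equations) constraints on the entries of $\bm{W}$, so $\mathscr{H}$ is an affine subspace.

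Next I would pass to the real picture. Writing $\bm{W}=\bm{X}+\imath\bm{Y}$ with $\bm{X},\bm{Y}\in\mathbb{R}^{N\times N}$, condition (i) becomes $X_{jk}=X_{j'k'}$ and $Y_{jk}=Y_{j'k'}$ for all $j+k=j'+k'$, while condition (ii) becomes $X_{kl}=\mathrm{Re}(\true{x}_j)$ and $Y_{kl}=\mathrm{Im}(\true{x}_j)$ for $j\in\Theta$, $k+l=j$. Identifying $[\bm{X},\bm{Y}]$ with a point $\bm{u}\in\mathbb{R}^{2N^2}$, each of these is of the form $g(\bm{u})=0$ for an affine, hence real-polynomial, function $g$; enumerating them as $g_1,\ldots,g_q$ gives
$$
\mathscr{K}=\bigcap_{i=1}^{q}\left\{\bm{u}\in\mathbb{R}^{2N^2}~\big|~g_i(\bm{u})=0\right\},
$$
which matches the template of a semi-algebraic set with $p=1$ (one may append the trivial strict inequality $h(\bm{u})=-1<0$ to fit it verbatim). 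Hence $\mathscr{K}$ is semi-algebraic.

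There is essentially no obstacle here beyond bookkeeping: the one point worth stating explicitly is that the Hankel structure and the data-consistency requirement are each captured by finitely many polynomial equalities, after which semi-algebraicity is immediate from the definition. Alternatively, one can argue exactly as at the end of the proof of Lemma~\ref{lem:setR} and quote \cite[Example 2]{BST:MP:14}, since an affine subspace of $\mathbb{R}^d$ is the solution set of a finite linear system and is therefore semi-algebraic.
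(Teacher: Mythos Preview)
Your proof is correct and follows essentially the same approach as the paper's: both recognize that $\mathscr{K}$ is an affine subspace of $\mathbb{R}^{2N^2}$ and conclude semi-algebraicity from that. The only cosmetic difference is that the paper first observes the real and imaginary parts decouple, writing $\mathscr{K}=\mathscr{K}_1\times\mathscr{K}_2$ as a Cartesian product of two real affine spaces, whereas you list all the affine constraints at once; the substance is identical.
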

\begin{proof}
Since $\mathcal{H}$ is a linear operator,
$$
\mathcal{H}\bm{x}=\mathcal{H}\Re(\bm{x})+\imath\mathcal{H}\Im(\bm{x}).
$$
Futher, for any $\bm{x}$ satisfying $\bm{x}_{\Theta}=\true{\bm{x}}_{\Theta}$, we have $\Re(\bm{x}_{\Theta})=\Re(\true{\bm{x}}_{\Theta})$ and $\Im(\bm{x}_{\Theta})=\Im(\true{\bm{x}}_{\Theta})$. Therefore,
$$
\mathscr{H}=\Re(\mathscr{H})+\imath\Im(\mathscr{H})
=\mathscr{K}_1+\imath\mathscr{K}_2,
$$
where
$$
\mathscr{K}_1=\{\mathcal{H}\bm{r}~|~\bm{r}\in\mathbb{R}^{2N-2},~\bm{r}_{\Theta}=\Re(\true{\bm{x}}_{\Theta})\},
\quad
\mathscr{K}_2=\{\mathcal{H}\bm{i}~|~\bm{i}\in\mathbb{R}^{2N-2},~\bm{i}_{\Theta}=\Im(\true{\bm{x}}_{\Theta})\}.
$$
This shows $\mathscr{K}=\mathscr{K}_1\times \mathscr{K}_2$. Since both $\mathscr{K}_1$ are $\mathscr{K}_2$ are affine spaces, their product $\mathscr{K}$ is also, which implies $\mathscr{K}$ is semi-algebraic.
\end{proof}

Combining Theorem \ref{thm:convPAMA} and Lemmas \ref{lem:setR} and \ref{lem:setH} leads to the following convergence results of the proposed algorithm \eqref{eq:alg_PWGD}.
\begin{theorem}\label{thm:convPWGD}
Let $(\bm{L}_t,\bm{H}_t)$ be generated by \eqref{eq:alg_PWGD} with $0<\delta_1,\delta_2<1$.
\begin{itemize}
\item[(a)] Either $\|(\bm{L}_t,\bm{H}_t)\|_F\to\infty$ as $t\to\infty$, or $(\bm{L}_t,\bm{H}_t)$ converges.
\item[(b)] If we further assume $(\bm{L}_0,\bm{H}_0)$ is feasible and sufficiently close to a global minimizer of $\{\frac12\|\bm{L}-\bm{H}\|_F^2|\bm{L}\in\mathscr{R}^R_{\mathbb{C}},\bm{H}\in\mathscr{H}\}$, then $(\bm{L}_0,\bm{H}_0)$ converges to a global minimizer of $\min_{\bm{L}\in\mathscr{R}^R_{\mathbb{C}},\bm{H}\in\mathscr{H}}\frac12\|\bm{L}-\bm{H}\|_F^2$.
\end{itemize}
\end{theorem}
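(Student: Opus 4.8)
The plan is to reduce Theorem \ref{thm:convPWGD} to Theorem \ref{thm:convPAMA} via the realification of the complex matrix spaces. First I would introduce the $\mathbb{R}$-linear bijection sending $\bm{Z}\in\mathbb{C}^{N\times N}$ to the pair $[\Re(\bm{Z}),\Im(\bm{Z})]\in\mathbb{R}^{N\times N}\times\mathbb{R}^{N\times N}$, applied blockwise to $\bm{Z}=\left[\begin{matrix}\bm{L}\cr\bm{H}\end{matrix}\right]$. Since $\|\bm{L}-\bm{H}\|_F^2=\|\Re(\bm{L})-\Re(\bm{H})\|_F^2+\|\Im(\bm{L})-\Im(\bm{H})\|_F^2$, the objective $F(\bm{L},\bm{H})=\frac12\|\bm{L}-\bm{H}\|_F^2$ becomes exactly $\phi(\bm{x},\bm{y})=\frac12\|\bm{x}-\bm{y}\|_2^2$, with $\bm{x}$ and $\bm{y}$ the realifications of $\bm{L}$ and $\bm{H}$; moreover the Euclidean norm on the realified space equals the Frobenius norm, so this identification is an isometry and hence intertwines orthogonal projection onto any set with projection onto its realification. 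In particular $\mathcal{P}_{\mathscr{R}^R_\mathbb{C}}$ and $\mathcal{P}_{\mathscr{H}}$ correspond to $\mathcal{P}_C$ and $\mathcal{P}_D$ with $C=\mathscr{S}_R$ (the set of Lemma \ref{lem:setR}) and $D=\mathscr{K}$ (the set of Lemma \ref{lem:setH}); both $C$ and $D$ are closed (a determinantal variety and an affine space respectively), so the indicators $\delta_C,\delta_D$ are proper and lower semicontinuous as required.

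Next I would match the two iterations. The Wirtinger computation in the excerpt gives $2\partial F/\partial\overline{\bm{Z}}=\left[\begin{matrix}\bm{L}-\bm{H}\cr\bm{H}-\bm{L}\end{matrix}\right]$, so the PWGD step $\bm{L}_t-\delta_1(\bm{L}_t-\bm{H}_t)$ realifies to $\bm{x}_t-\delta_1(\bm{x}_t-\bm{y}_t)=\bm{x}_t-\nabla_{\bm{x}}\big(\delta_1\phi\big)(\bm{x}_t,\bm{y}_t)$, and likewise for the $\bm{H}$-update; hence \eqref{eq:alg_PWGD} is precisely \eqref{eq:altproj} with $\frac{1}{1+\delta_k}=\delta_1$ and $\frac{1}{1+\mu_k}=\delta_2$, i.e. $\delta_k=\delta_1^{-1}-1$ and $\mu_k=\delta_2^{-1}-1$. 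The hypothesis $0<\delta_1,\delta_2<1$ forces these to be strictly positive constants, independent of $k$, so trivially $0<a<\delta_k,\mu_k<b$ for suitable constants $a,b$, which is the step-size requirement of Theorem \ref{thm:convPAMA}.

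With these identifications in place, Lemmas \ref{lem:setR} and \ref{lem:setH} furnish exactly the standing assumption of Theorem \ref{thm:convPAMA} that $C$ and $D$ are semi-algebraic. Applying Theorem \ref{thm:convPAMA}(a) to the realified iterates $(\bm{x}_t,\bm{y}_t)$ yields that either $\|(\bm{x}_t,\bm{y}_t)\|_2\to\infty$ or $(\bm{x}_t,\bm{y}_t)$ converges to a critical point of $\psi=\phi+\delta_C+\delta_D$; transporting this back through the isometry, and using $\|(\bm{x}_t,\bm{y}_t)\|_2=\|(\bm{L}_t,\bm{H}_t)\|_F$ and the fact that convergence is preserved by a linear homeomorphism, gives statement (a). Statement (b) follows in the same way from Theorem \ref{thm:convPAMA}(b), once one observes that the global minimizers of $\psi$ are precisely the realifications of the global minimizers of $\min_{\bm{L}\in\mathscr{R}^R_{\mathbb{C}},\bm{H}\in\mathscr{H}}\frac12\|\bm{L}-\bm{H}\|_F^2$, and that ``feasible'' and ``sufficiently close'' are notions preserved by the isometry.

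The substantive content is entirely in the two semi-algebraicity lemmas, which are already established; the remaining work is the bookkeeping that realification is an isometry intertwining $\mathcal{P}_{\mathscr{R}^R_\mathbb{C}},\mathcal{P}_{\mathscr{H}}$ with $\mathcal{P}_C,\mathcal{P}_D$ and the Wirtinger gradient flow with the Euclidean gradient flow of $\phi$. The one point demanding a little care is that $\mathcal{P}_{\mathscr{R}^R_\mathbb{C}}$ is genuinely set-valued when $\sigma_R=\sigma_{R+1}$; this causes no difficulty, however, since \eqref{eq:altproj} and Theorem \ref{thm:convPAMA} are already formulated for set-valued projections.
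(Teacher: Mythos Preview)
Your proposal is correct and follows essentially the same route as the paper: realify the complex iterates, observe that the objective and projections are preserved under this isometry, invoke Lemmas~\ref{lem:setR} and~\ref{lem:setH} for semi-algebraicity of the realified constraint sets, and then apply Theorem~\ref{thm:convPAMA}. Your write-up is in fact more explicit than the paper's (which simply states that the theorem follows by combining Theorem~\ref{thm:convPAMA} with the two lemmas), particularly in the step-size matching $\delta_k=\delta_1^{-1}-1$, the closedness of $C$ and $D$, and the remark on set-valued projections.
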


We would like to remark that the unboundedness in \eqref{thm:convPWGD}(a) is not a problem and can be overcome by introduce a bound constraint in the set $\mathscr{H}$. For example, we can define $\tilde{\mathscr{H}}=\mathscr{H}\cap\{\bm{X}~|~\|\bm{X}\|_{\infty}\leq B\}$ with $B$ a very large number, and \eqref{eq:alg_PWGD} is slightly modified by replacing $\mathscr{H}$ by $\tilde{\mathscr{H}}$. Then all the conditions in Theorem \ref{thm:convPAMA} can still be verified. Since $\|(\bm{L}_k,\bm{H}_k)\|_F\not\to\infty$, we must have $(\bm{L}_k,\bm{H}_k)$ converges.

\subsection{Acceleration by a FISTA Scheme}
In this subsection, we propose a scheme to accelerate the convergence of the PWGD algorithm \eqref{eq:alg_PWGD}. Our scheme borrows from the fast iterative shrinkage-thresholding algorithm (FISTA) \cite{BT:SIIMS:09}, which has been proven to be efficient in minimizing the sum of two convex functions with one having a Lipschitz continuous gradient. The basic idea is to use a specific linear combination of two successive iterates. Although our problem is non-convex, we still employ the linear combination scheme in FISTA for our model.

Our PWGD with FISTA scheme, called PWGD-FISTA, is constructed as follows: Given $k_0 = 1$, we generate $\{\bm{L}_t,\bm{H}_t\}$ by
\begin{equation}\label{eq:alg_PWGD-FISTA}
 \left\{ \begin{array}{l l}
      \bm{L}_{t+1}\in \mathcal{P}_{\mathscr{R}^R_\mathbb{C}}(\bm{L}_t-\delta_1(\bm{L}_t-\tilde{\bm{H}}_t)),\\
     \bm{H}_{t+1} \in \mathcal{P}_{\mathscr{H}}(\bm{H}_t -\delta_2(\tilde{\bm{H}}_t-\bm{L}_{t+1})),\\
     \ k_{t+1}=\frac{\sqrt{1+4k_t^2}+1}{2},\\
     \ \tilde{\bm{H}}_{t+1}=\bm{H}_{t+1}+\frac{k_t-1}{k_{t+1}}(\bm{H}_{t+1}-\bm{H}_t)
   \end{array} \right.
\end{equation}
Since $\mathscr{H}$ is an affine subspace, the linear combination in the last line of \eqref{eq:alg_PWGD-FISTA} does not change the feasibility of $\tilde{\bm{H}}_{t+1}$, i.e., $\tilde{\bm{H}}_{t+1}\in\mathscr{H}$. This guarantees that the computational complexity and required storage of Step 1 and Step 2 in the PWGD-FISTA algorithm are the same as that in the PWGD algorithm \eqref{eq:alg_PWGD}. Also, the computational effort in Step 3 and Step 4 of \eqref{eq:alg_PWGD-FISTA} is negligible compared with that in Step 1 and Step 2. Therefore, the PWGD-FISTA algorithm preserves the computational simplicity of the PWGD algorithm. As we will see in the numerical experiments section, the PWGD-FISTA Algorithm converges faster than the PWGD algorithm.

\section{Numerical experiments}
In this section, we use numerical experiments to demonstrate the effectiveness and efficiency of our proposed algorithms.

\subsection{Phase Transition}
We first illustrate that our proposed algorithm is able to recovery spectrally sparse signals from their very limited time domain samples. We fix the dimension of the signal to be $127$ (i.e. $N=64$), and we vary the sparsity $R$ and the number of samples $M$. For each $(R,M)$ pair, $100$ Monte Carlo trials were conducted. For each trial, the true signal is synthesized by randomly generating the true frequencies $\true{f}_k$'s and magnitudes $\true{d}_k$'s, which are independently uniformly distributed on $[0,1)$ and the unit circle respectively.
We then get $M$ samples uniformly at random. The PWGD is executed by setting the parameters $\delta_1 = \delta_2 = 0.9999$, and we stop the algorithm when $\|\bm{H}_{t+1}-\bm{H}_{t}\|_F/\|\bm{H}_{t}\|_F\leq 10^{-4}$. Signal recovery for each trial is considered successful if the relative error satisfies $\|\hat{\bm{x}}-\true{\bm{x}}\|_2/\|\true{\bm{x}}\|_2\leq 5\times 10^{-3}$, where $\hat{\bm{x}}$ denotes the solution returned by the PWGD. Figure \ref{fig:phase} illustrates the results of the Morte Carlo experiments. Here the horizontal axis corresponds to the number $M$ of the samples (i.e. the size of the observation location set $\Theta$), while the vertical axis corresponds to the sparsity level $R$. The empirical success rate is reflected by the color of each cell. It can be seen from the figure that our proposed algorithm has a high rate of successful recovery if $M$ exceeds than certain thresholds for a given $R$.

\begin{figure}[H]
\centering
\includegraphics[width=10cm, height=8cm]{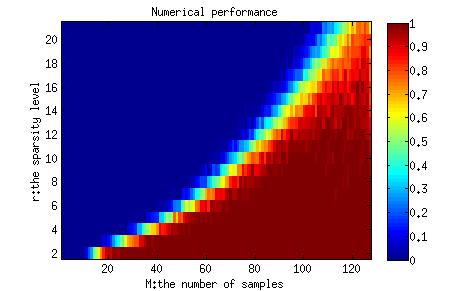}
\caption{Phase transition for successful recovery rate, where frequency locations are randomly generated. The horizontal axis stands for the number of samples, and the vertical axis represents the sparsity $R$. The demonstrated empirical success rate is calculated by averaging over 100 Monte Carlo trials.}\label{fig:phase}
\end{figure}

\subsection{Signals of large dimension}
Next we demonstrate that our proposed algorithm is able to recover signals of large scale, and compare it with the Enhance Matrix Completion (EMaC) in \cite{CC:TIT:14}. As we have argued, different from existing convex optimization based methods such as EMaC, our proposed algorithm is able to work with high-dimensional spectrally sparse signals. In Table \ref{tab:largescale}, the elapsed time for signals of different dimensions are listed. For our algorithm, we use the same settings as in the previous section. For EMaC algorithm, we used CVX software to solve it. From the table, we can see that PWGD can greatly speed up the the signal recovery for moderate dimensions and also work well for signals of high dimensions.

\begin{table}[h]
\begin{center}
\begin{tabular}{c|c|c}
\hline\hline
          & PWGD       &  EMaC      \\ \hline
the signal with $N=51,R=1,M=10$      & 0.34  & 46.8 \\
the signal with $N=51,R=3,M=20$      & 0.46  & 58.0 \\
the signal with $N=101,R=5,M=40$     & 0.95  & out of memory \\
the signal with $N=501,R=5,M=100$    & 12.7 & out of memory  \\
the signal with $N=2501,R=13,M=500$    & 133  & out of memory\\
the signal with $N=2501,R=25,M=1000$    & 91.4&  out of memory \\
the signal with $N=5001,R=20,M=1000$ &  645& out of memory \\
the signal with $N=5001,R=31,M=2000$ &   403 &  out of memory \\ \hline
\end{tabular}
\caption{Elapsed time in seconds for signals of different dimensions.}\label{tab:largescale}
\end{center}
\end{table}

\subsection{Acceleration by a FISTA-like Scheme}
Figure \ref{fig:CompConv} depicts the convergence curves of PWGD and PWGD-FISTA. We see clearly that the PWGD-FISTA Algorithm converges faster than the PWGD algorithm. Roughly, the PWGD-FISTA needs only $2/3$ number of iterations that PWGD requires to get solutions of the same accuracy.

\begin{figure}[H]
\begin{subfigure}{.45\textwidth}
  \centering
  \includegraphics[width=.8\linewidth]{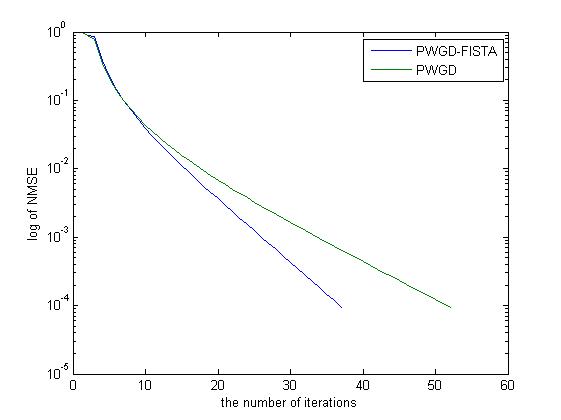}
  \caption{$N=501,~R=8,~M=200$}
\end{subfigure}
\begin{subfigure}{.45\textwidth}
  \centering
  \includegraphics[width=.8\linewidth]{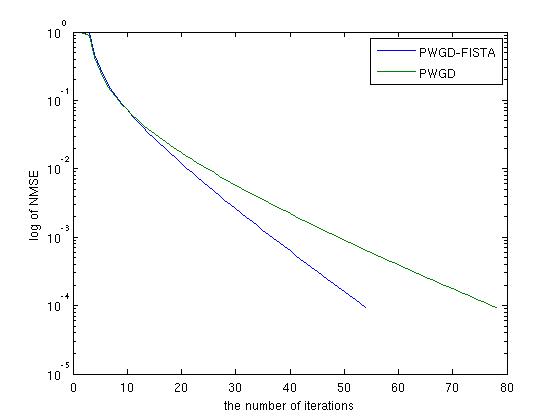}
  \caption{$N=5001,~R=20,~M=1000$}
\end{subfigure}\\
\caption{Convergence rate comparison of PWGD and PWGD-FISTA .}
\label{fig:CompConv}
\end{figure}

\section{Conclusion}
In this paper, a fast iterative algorithm is proposed for recovering spectrally sparse signals whose frequencies can be any values in the continuous domain $[0,1)$ from a small amount of time domain samples. Different from existing algorithms, our proposed algorithm is able to deal with signals of large dimension. Inspired by the scheme in FISTA, we also provided an acceleration of the proposed algorithm. In the future, we will extend our algorithms to signal recovery from noisy samples and signals with multivariate frequencies.

\def\cprime{$'$}

\end{document}